\documentclass[aps,pra,showpacs,twoside,twocolumn,10pt,nofootinbib, longbibliography]{revtex4-2}
\usepackage[colorlinks=true, citecolor=red, urlcolor=blue ]{hyperref}
\usepackage{times,epsfig,amssymb,amsfonts,amsmath,bm,subfigure,mathtools,amsthm,braket,soul,enumitem,color,physics,graphics,graphicx}
\usepackage[normalem]{ulem}
\usepackage{comment}
\usepackage{epstopdf}

\newtheorem{proposition}{Proposition}

\usepackage{optidef}

\begin{document}

\title{Disparity between multipartite entangling and disentangling powers of unitaries: Even vs Odd
}

\author{Mrinmoy Samanta, Sudipta Mondal, Aditi Sen (De)}
\affiliation{Harish-Chandra Research Institute, A CI of Homi Bhabha National Institute, Chhatnag Road, Jhunsi, Prayagraj 211 019, India}

\begin{abstract}
\label{sec:abs}

We compare the multipartite entangling and disentangling powers of unitary operators by assessing their ability to generate or eliminate genuine multipartite entanglement.  Our findings reveal that while diagonal unitary operators can exhibit equal entangling and disentangling powers, certain non-diagonal unitaries demonstrate an imbalance when acting on fully separable states, thereby extending the known disparity from bipartite systems to those with any number of parties.  Counterintuitively, we construct classes of unitaries and their adjoints  that display unequal entanglement generation capacities, behaving  differently when applied to systems with an even number of qubits compared to those with an odd number. Further, we illustrate that this asymmetry can be simulated using physically realizable Hamiltonians: systems with an even number of qubits employ nearest-neighbor Dzyaloshinskii–Moriya (DM) interactions, while those with an odd number utilize a combination of Heisenberg and DM interactions.
Additionally, we present a circuit composed of random noncommuting unitaries, constructed from alternating layers of two-qubit Haar-random gates, to illustrate the discrepancy in the entangling and disentangling capabilities of unitaries.


\end{abstract}
\maketitle
\section{Introduction}
\label{sec:intro}

Over the past few decades, quantum algorithms \cite{Nielsen2000QuantumComputation} have revolutionized computational complexity, surpassing existing classical algorithms. Notable breakthroughs include the Deutsch–Jozsa and Simon’s algorithms, which demonstrate quantum speedup in oracle problems \cite{Deutsch_1992, Simon1997}, Shor’s algorithm to factor integers into their prime factors, providing  exponential speedup  \cite{Shor1997} and Grover’s  algorithm, offering quadratic improvement for unstructured search tasks \cite{Grover1996}. 
 Moreover, recent experimental achievements, such as Google's demonstration of quantum supremacy through random circuit sampling, have outperformed classical supercomputers~\cite{Arute2019}. Additionally, algorithms like Deutsch–Jozsa, Grover’s, Shor's, and the Harrow-Hassidim-Lloyd algorithm for solving linear systems have been implemented across various physical platforms, including superconducting qubits, trapped ions, and photonic architectures~\cite{Brickman2005_pra, Walther2005, Lanyon2007_prl, Lu2007, Politi2009, Lucero2012, Cai2013_prl, Krantz2019, Kjaergaard2020, Takita_2017, Martin2023, trappedionexp}.  Note that the successful execution of quantum algorithms heavily relies  on the design of optimal quantum circuits composed of several quantum gates \cite{Barenco1995, KHANEJA200111, Barenco1995, Duan2019_pra, Guerreschi2022, Ruiz2025}.  These developments naturally lead to inquiries about the quantum resources that enable such computational advantages.
Indeed, it has been demonstrated that quantum benefits arise from the generation of quantum correlations, such as  entanglement~\cite{Horodecki_2009_RMP},  quantum discord  \cite{Ollivier_2001_prl,Modi2012,Bera_2017},  and  coherence \cite{Baumgratz_2014_prl,Radhakrishnan_2016}, having no classical analogues.  These correlations are harnessed through various operations within quantum circuits, underscoring their pivotal role in quantum computing \cite{Ekert1998, Azuma2001, Shimoni2005, kendon2006, Datta2008, Rossi2013, Shubhalakshmi2020, Naseri2022, maiti2022quantumphaseestimationpresence, Kumar2024, Agrawal2025}. 

To design efficient quantum circuits, it is essential to understand how quantum gates or unitary operators generate or destroy quantum resources, especially entanglement, required for quantum information processing \cite{Zanardi00, Zanardi01, Zanardi02, Zanardi04, Linden2009_prl}. 
In pursuit of this goal, the entangling power of a unitary operator, \(U\), was introduced, which measures its capacity to generate entanglement from product states \cite{Zanardi01}. On the other hand, the disentangling power, or entangling power of \(U^{\dagger}\), measures its ability to reduce entanglement when acting on entangled inputs \cite{Berry2003_pra, Zanardi00,  Zanardi01, Zanardi02, Zanardi04, Clarisse_2007, Linden2009_prl, balakrishnan_2010,galve_2013,Chen2016,Chen2016_pra,Shen_2018}.  Despite being mathematically reversible~\cite{Nielsen2000QuantumComputation}, unitary operations, \(U\), and their adjoints \( U^\dagger \) frequently have inherent asymmetry in their physical roles in quantum information processing, especially in the creation of entanglement, as shown for bipartite settings with different local dimensions ~\cite{Linden2009_prl} (when both parties have the same dimension, these two powers are equal~\cite{Berry2003_pra}).


With the rapid advancement of establishing scalability in quantum computing ~\cite{DiVincenzo2000,Fowler2012SurfaceCode,Campbell2017FTQC,Preskill2018NISQ,Arute2019Supremacy,Henriet2020NeutralAtoms} and quantum communication systems~\cite{kimble2008quantum,Cuquet2012,Wehner2018, aditi2010, Ben-Or1988MultipartyComputation,Kitaev1995QuantumCryptography,Shor1999QuantumSecretSharing,Cleve1999Fairplay,Gisin2002QuantumCryptography,Vazirani2003MultipartyQuantum,Gottesman2004QuantumProtocols,Broadbent2009QuantumMultiparty}, it is fascinating to determine if the disparity between entangling and disentangling powers persists for unitaries acting on an arbitrary number of parties.   
In this respect, although a limited number of studies introduce the notion of multipartite entangling power  \cite{Linowski2020, hazra2023hierarchiesgenuinemultipartiteentangling,mondal2023imperfectentanglingpowerquantum, Qiu2025}, a comprehensive study of the power of entanglement versus disentanglement in a multipartite setting is still lacking. In this work,  our aim is to fill this gap, as it is essential for entanglement generation procedures that are resistant to irreversibility on both a structural and an operational level. 

At the outset, we prove that {\it multipartite} entangling and disentangling capabilities are equivalent for a large class of diagonal unitary operators after their actions on a fully separable state, thereby establishing a form of reversibility. This equivalence is  further numerically confirmed for diagonal unitaries generated at random. However, this symmetry breaks down for non-diagonal unitaries. In particular, we construct a family of non-diagonal unitaries that exhibit a clear disparity between their entangling and disentangling capabilities. Extending this investigation, we exhibit that such irreversibility persists for Haar-random unitaries when the optimization is performed across  the set of fully separable inputs, highlighting the inherent inequivalence in generic non-diagonal unitary evolutions. Surprisingly, the unitaries revealing unequal entangling and disentangling powers differ between systems with an even and odd number of parties. Moreover,  we illustrate that for a lattice with an even number of qubits, unitaries with different entangling and disentangling powers can be simulated using engineered Hamiltonians involving the Dzyaloshinskii-Moriya (DM) interaction \cite{Dzyaloshinsky1958, Moriya1960, Moriya1960_prl, Jafari2008}, whereas for an odd number of sites, the governing Hamiltonian must include both Heisenberg and DM interactions.  Additionally, we provide two-layered quantum circuits \cite{Nahum2017_prx, Nahum2018_prx, Bao2020_prb, Bera2020, skinner2023lecturenotesintroductionrandom} comprised of random two-qubit quantum gates  that can also detect the difference between the odd- and even-number of inputs according to the entanglement of the outputs. 


Our paper is organized as follows. In Sec.~\ref{sec:definition}, we set the definitions of multipartite entangling and disentangling powers and we prove these quantities to be equal for multipartite diagonal unitary operators (see Sec.~\ref{sec:diagonal}). In Sec.~\ref{sec:nd_realization}, we explicitly demonstrate the difference between   multipartite entangling and disentangling powers for certain specifically constructed non-diagonal unitary operators. We show in Sec. \ref{sec:unitaryHamil} that this dissimilarity can also be generated through the dynamics via  Dzyaloshinskii–Moriya interacting Hamiltonian for an even number of qubits as well as through Heisenberg and DM interactions together for odd lattice sites. Finally, we conclude in Sec.~\ref{sec:conclusio}.

\section{Multipartite Entangling and Disentangling Powers of Unitary Operators}
\label{sec:definition}

Let us begin by defining the multipartite entangling and disentangling powers based on the concept of genuine multipartite entanglement (GME) measure (see Appendix \ref{sec:ggm}), namely generalized geometric measure (GGM) \cite{aditi2010, Wei2003_pra, Blasone2008_pra}.

\subsection{ Definition of multipartite entangling and disentangling powers}
\label{sec:def}

Given a unitary operator, its {\it multipartite} entangling power is determined by its ability to generate maximum genuine multipartite entanglement from a fully separable (FS) state. Quantitatively,
for a given GME measure, \( \mathcal{G} \), the multipartite entangling power of a unitary \( U \), acted on \(N\)-party states is defined as \cite{hazra2023hierarchiesgenuinemultipartiteentangling, mondal2023imperfectentanglingpowerquantum, Qiu2025}
\begin{eqnarray}
     \mathcal{E}_N(U) &=& \max_{\ket{\psi}_{\text{FS}}^N \in S_N} \mathcal{G}(U \ket{\psi}_{\text{FS}}^N),
   \label{eq:ent_pow_full}
\end{eqnarray}
where $\ket{\psi}^{N}_\text{FS} = \bigotimes\limits_{i=1}^{N} \ket{\psi_i}$ represents a fully separable state and the maximization is performed over the set of FS states, denoted by $S_N$. Likewise, the ability of a unitary operator to reduce genuine multipartite entanglement defines its \textit{multipartite disentangling power}. This quantity measures how effectively a unitary operator can transform an initially genuine multipartite entangled  state into a FS state. It can also be rewritten as the entangling power of an adjoint unitary operator, given by
\begin{eqnarray}
    \mathcal{D}_N(U) \equiv  \mathcal{E}_N({U}^{\dagger}) = \max_{\ket{\psi}_{\text{FS}}^N \in S_N} \mathcal{G}({U}^{\dagger} \ket{\psi}_{\text{FS}}^N),
   \label{eq:diss_pow_full}
\end{eqnarray}
where the maximization is again performed over the set \( S_N \). Note that instead of FS states, both the definition can be extended by modifying the set for optimization (see Appendix \ref{sec:ggm}) \cite{hazra2023hierarchiesgenuinemultipartiteentangling,mondal2023imperfectentanglingpowerquantum}. Intuitively, it seems that all unitary operators have equal $\mathcal{E}_N(U)$ and $D_N(U)$. However, it was shown \cite{Linden2009_prl} that for bipartite systems, there exists a class of  unitary operators for which $\mathcal{E}_2(U)\neq \mathcal{D}_2(U)$ . Hence a natural question arises whether such a unitary operator and its corresponding adjoint exist in a multipartite regime which can exhibit unequalness. In this work, we construct such classes of unitary operators. 



\subsection{Alike entangling and disentangling power for diagonal unitary operators}
\label{sec:diagonal}

Let us first prove  that entangling and disentangling capabilities of diagonal unitary operators acted on \(N\)- party systems remain same $\forall ~ N$. To exhibit this analytically, we first show this equivalence for a specific class of eight-dimensional diagonal unitary operators and then extend it for  \(2^N\)-dimensional diagonal unitaries. We then argue that such similarity between entangling and disentangling powers holds for arbitrary diagonal unitaries, which is supported by our  numerical simulations (Appendix \ref{sec:diag_app1}).

{\it Three-qubit diagonal unitaries.} To initiate this study, we begin with a single-parameter family of diagonal unitaries, \(U_{d,\phi} = \mathrm{diag}(1,1,1,1,1,1,1,e^{i\phi})\) ($\phi \in \mathbb{R} $), and evaluate their entangling and disentangling powers with respect to fully separable input states. 

\begin{proposition}
\label{prop:diag_fullsep}

The entangling, \(\mathcal{E}_3(U_{d,\phi})\), and disentangling \(\mathcal{D}_3(U_{d,\phi})\), powers  coincide for \(U_{d,\phi}\).
\end{proposition}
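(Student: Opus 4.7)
The plan is to exploit the fact that $U_{d,\phi}^{\dagger} = U_{d,-\phi}$, so proving equality of entangling and disentangling powers reduces to showing $\mathcal{E}_3(U_{d,\phi}) = \mathcal{E}_3(U_{d,-\phi})$, i.e.\ that the entangling power depends only on $|\phi|$. I would establish this via a symmetry argument based on complex conjugation of the input state in the computational basis.

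First, I would parametrize an arbitrary fully separable three-qubit input as $\ket{\psi}_{\text{FS}}^3 = \bigotimes_{i=1}^{3}\bigl(c_0^{(i)}\ket{0}+c_1^{(i)}\ket{1}\bigr)$ and write
\begin{equation}
U_{d,\phi}\ket{\psi}_{\text{FS}}^3 = \sum_{k_1 k_2 k_3} c_{k_1}^{(1)} c_{k_2}^{(2)} c_{k_3}^{(3)}\, e^{i\phi\,\delta_{k_1 k_2 k_3,\,111}}\,\ket{k_1 k_2 k_3}.
\end{equation}
Next, I would introduce the ``conjugate'' FS state $\ket{\psi^{*}}_{\text{FS}}^3 = \bigotimes_i\bigl((c_0^{(i)})^{*}\ket{0}+(c_1^{(i)})^{*}\ket{1}\bigr)$ and verify by direct substitution that $U_{d,-\phi}\ket{\psi^{*}}_{\text{FS}}^3 = \bigl[U_{d,\phi}\ket{\psi}_{\text{FS}}^3\bigr]^{*}$, where the star denotes entrywise conjugation in the computational basis.

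The crucial observation is that the map $\ket{\psi}_{\text{FS}}^3 \mapsto \ket{\psi^{*}}_{\text{FS}}^3$ is a bijection of $S_3$ onto itself, since local conjugation preserves the product structure. Moreover, if $\ket{\Phi}$ and $\ket{\Phi^{*}}$ are related by complex conjugation, then for every bipartition $A|\bar{A}$ one has $\rho_A(\ket{\Phi^{*}}) = \bigl[\rho_A(\ket{\Phi})\bigr]^{*}$, and these two matrices share the same real spectrum. Consequently the maximum Schmidt eigenvalue across every bipartition, and hence the GGM, is invariant under conjugation: $\mathcal{G}(U_{d,-\phi}\ket{\psi^{*}}_{\text{FS}}^3) = \mathcal{G}(U_{d,\phi}\ket{\psi}_{\text{FS}}^3)$.

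Taking the supremum over $S_3$ on both sides and using the bijectivity of the conjugation map then yields $\mathcal{E}_3(U_{d,-\phi}) = \mathcal{E}_3(U_{d,\phi})$, which is precisely $\mathcal{D}_3(U_{d,\phi}) = \mathcal{E}_3(U_{d,\phi})$. The only mildly delicate step will be verifying that the GGM—defined via the maximum Schmidt coefficient over all bipartitions—is genuinely invariant under entrywise conjugation of the state in a product basis; this is a one-line spectral remark, but it is what powers the whole argument, so I would state it explicitly. No hard optimization of GGM over $S_3$ is required, which is the point: the equality is established structurally without ever computing $\mathcal{E}_3(U_{d,\phi})$ as a function of $\phi$.
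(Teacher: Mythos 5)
Your proof is correct, and it takes a genuinely different route from the paper's. The paper proceeds by brute force: it writes out the single-site reduced density matrices of $U_{d,\phi}\ket{\psi}_{\text{FS}}^3$, argues (via stationarity conditions) that the optimum occurs at $\theta_1=\theta_2=\theta_3$, and then reads off that the resulting eigenvalue expression $A$ contains $\phi$ only through $\cos\phi$, so that replacing $\phi$ by $-\phi$ changes nothing. Your argument instead identifies the antiunitary symmetry $U_{d,-\phi}\ket{\psi^{*}} = \bigl[U_{d,\phi}\ket{\psi}\bigr]^{*}$, uses the fact that complex conjugation in a product basis maps $S_3$ bijectively onto itself, and invokes the spectral invariance $\operatorname{spec}\bigl(\rho_A^{*}\bigr)=\operatorname{spec}(\rho_A)$ to conclude that the GGM is conjugation-invariant; the key one-line spectral remark you flag is indeed correct and is exactly what makes the argument go through. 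What each approach buys: the paper's computation yields the explicit closed form of $\mathcal{E}_3(U_{d,\phi})$ as a function of $\phi$ (useful elsewhere in the paper), but it leans on a numerically/variationally justified symmetry reduction of the optimization. Your argument requires no optimization at all, and — since $U_{d,\{\phi_i\}}^{\dagger}=\overline{U_{d,\{\phi_i\}}}$ for \emph{any} diagonal unitary in the computational basis — it immediately proves the general statement for arbitrary diagonal unitaries on arbitrarily many qubits, i.e., it subsumes Proposition~2 and the equivalence that the paper only verifies numerically in its Remark and Appendix. It would be worth stating explicitly that the GGM (being a function only of the spectra of reduced density matrices) is invariant under any local-basis complex conjugation, since that is the single load-bearing step.
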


\begin{proof}
The action of the three-qubit diagonal unitary operator \(U_{d,\phi}\) on a fully separable input state, \(|\psi\rangle^{3}_{\mathrm{FS}} = \bigotimes\limits_{i=1}^{3} |\psi_i\rangle\), where  \(|\psi_i\rangle = \cos \frac{\theta_i}{2} |0\rangle + e^{i \xi_i} \sin\frac{\theta_i}{2} |1\rangle\) leads to the resulting output state,
\begin{eqnarray}
   |\psi\rangle_{out}^{3} &=& U_{d,\phi} |\psi \rangle_{FS}^{3}=\sum_{i_1,i_2,i_3=0}^1 a_{i_1i_2i_3} e^{i \phi_{i_1i_2i_3}}|i_1i_2i_3\rangle ,\nonumber\\
\label{eq:sep3_out}
\end{eqnarray}
where \(e^{i \phi_{i_1i_2i_3}} =1\) except \(e^{i \phi_{111}} = e^{i \phi}\) and  \(\{a_{000}=\prod_{i=1}^{3}\cos \frac{\theta_{i}}{2}, a_{001}=e^{i\xi_{3}} \prod_{i=1}^{2} \cos \frac{\theta_{i}}{2} \sin \frac{\theta_3}{2}, \ldots, a_{111}=e^{i \phi} e^{i \sum_{i=1}^{3}\xi_{i}} \prod_{i=1}^{3}\sin \frac{\theta_{i}}{2}\}\).
We first notice that eigenvalues of the local density matrices required to compute GGM do not depend on $\xi_i$ of \(\ket{\psi_i}\) and hence, we set  \(\xi_i = 0\). Therefore, the entangling power turns out to be
\begin{eqnarray}
\mathcal{E}_{3}(U_{d,\phi}) = \max_{\ket{\psi}^3_{\text{FS}} \in S_3} \mathcal{G}(U_{d,\phi} \ket{\psi}^3_{\text{FS}})  = \max_{\theta}(\frac12 - \frac{1}{32}\sqrt{A}),\nonumber\\
\end{eqnarray}
where \( A =218 + 16 \cos\theta + 49 \cos2 \theta - 24 \cos 3 \theta - 10 \cos 4 \theta + 8 \cos 5 \theta - \cos 6 \theta - 1024 \cos^4\frac{\theta}{2} \left(-3 + \cos\theta\right) \cos \phi \sin^6\frac{\theta}{2}\), and the second equality is written by noticing that the optimization occurs for $\theta_1=\theta_2=\theta_3=\theta$ (say). Due to the presence of the \(\cos \phi\) term in the expression for \(A\), and since $\cos(-\phi) = \cos \phi$, \(\mathcal{G}(U_{d,\phi}\ket{\psi}_{FS}^3) = \mathcal{G}(U_{d,\phi}^\dagger \ket{\psi}_{FS}^3)\), even before the maximization over the set of states, and hence, $\mathcal{D}_3(U_{d,\phi}) = \mathcal{E}_3(U_{d,\phi})$.
The detailed calculation is in Appendix \ref{sec:pro_1}. Hence, the proof.
\end{proof}


{\it Entanglement - generating and - destroying capabilities of a family of arbitrary diagonal unitaries in arbitrary dimension.} Let us first consider \(U_{d,\phi}=\mathrm{diag}(1,\ldots,e^{i\phi})\) which acts on \(N\)- party FS states and the relation between entangling and disentangling powers remains same as shown in Proposition 2 below in a similar manner as in Proposition 1.
\begin{proposition}
\label{prop:alldiag_fullsep}
The entangling, \(\mathcal{E}_N(U_{d,\phi})\) and the disentangling, \(\mathcal{D}_N(U_{d,\phi})\), powers are same for \(U_{d,\phi}\).
\end{proposition}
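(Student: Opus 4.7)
The plan is to extend the argument of Proposition~\ref{prop:diag_fullsep} essentially verbatim, since the structure of $U_{d,\phi} = \mathrm{diag}(1,\ldots,1,e^{i\phi})$ is unchanged: only the single computational basis vector $\ket{1\cdots 1}$ acquires a nontrivial phase under its action. First, I would write the output as $\ket{\psi}_{\mathrm{out}}^N = U_{d,\phi}\ket{\psi}_{\mathrm{FS}}^N = \sum_{\mathbf{i}} a_{\mathbf{i}}\, e^{i\phi_{\mathbf{i}}}\ket{\mathbf{i}}$, where $\phi_{\mathbf{i}}=0$ for every $\mathbf{i}$ except $\phi_{1\cdots 1}=\phi$. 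As in Proposition~\ref{prop:diag_fullsep}, the local phases $\xi_i$ contribute only overall phases on individual computational basis components and therefore leave the spectrum of every reduced density matrix untouched, so I would set $\xi_i=0$ throughout.

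The cleanest route to $\mathcal{E}_N(U_{d,\phi}) = \mathcal{D}_N(U_{d,\phi})$ is a symmetry argument based on complex conjugation. Let $K$ denote entry-wise complex conjugation in the computational basis, an antiunitary operator that factors as $K = K_1 \otimes \cdots \otimes K_N$. A direct check gives $K\, U_{d,\phi}\, K = U_{d,-\phi} = U_{d,\phi}^{\dagger}$. Because $K$ acts locally, it maps $S_N$ bijectively onto itself; and for any pure state $\ket{\chi}$, $\mathrm{Tr}_{\bar A}\bigl(K\ket{\chi}\bra{\chi}K\bigr) = \rho_A^{\ast}$, which carries the same spectrum as $\rho_A$. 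Hence GGM is invariant under $K$. Applied to any $\ket{\psi}_{\mathrm{FS}}^N \in S_N$, this yields $\mathcal{G}(U_{d,\phi}^{\dagger}\ket{\psi}_{\mathrm{FS}}^N) = \mathcal{G}\bigl(U_{d,\phi}\,K\ket{\psi}_{\mathrm{FS}}^N\bigr)$, and since $K\ket{\psi}_{\mathrm{FS}}^N \in S_N$ as well, maximizing over $S_N$ on both sides gives the claimed equality.

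An equivalent but more computational route would expand $\rho_A$ for each bipartition $A\mid\bar A$ and check that every $\phi$-dependent off-diagonal entry enters through a Hermitian-conjugate pair $e^{\pm i\phi}$, so that the characteristic polynomial depends on $\phi$ only through $\cos\phi$ and is manifestly invariant under $\phi \mapsto -\phi$. The main obstacle will lie in this second route: for general $N$ and arbitrary bipartitions, tracking which amplitudes pick up $e^{\pm i\phi}$ after the partial trace and verifying that all residual $\phi$-dependence is even becomes a tedious combinatorial exercise that has to be repeated for each bipartition. The antiunitary argument avoids this bookkeeping entirely and generalizes from $N=3$ to every $N$ with no additional work, which is why I would present it as the main proof.
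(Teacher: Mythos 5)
Your proof is correct, and it takes a genuinely different and in fact stronger route than the paper's. The paper proceeds computationally: it writes out the single-site reduced density matrices of $U_{d,\phi}\ket{\psi}_{\mathrm{FS}}^N$, asserts (on numerical evidence up to $N=8$) that the optimum occurs at $\theta_1=\cdots=\theta_N=\theta$ and that the $\xi_i$ drop out, and then observes that the resulting eigenvalue expression $A$ depends on $\phi$ only through $\cos\phi$, whence invariance under $\phi\mapsto-\phi$ gives $\mathcal{D}_N=\mathcal{E}_N$. Your antiunitary argument --- $K\,U_{d,\phi}\,K=U_{d,\phi}^{\dagger}$ with $K$ local, $K(S_N)=S_N$, and $\mathcal{G}$ invariant under $K$ because every reduced density matrix is merely conjugated --- bypasses the reduced-density-matrix computation, the derivative conditions, and the numerically supported symmetry ansatz entirely, and establishes the equality of $\mathcal{G}(U_{d,\phi}\,K\ket{\psi})$ and $\mathcal{G}(U_{d,\phi}^{\dagger}\ket{\psi})$ state by state before any optimization. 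Two further advantages are worth noting: your argument is fully analytic where the paper's relies on numerics for the reduction to a single $\theta$, and since $K\,U_d\,K=U_d^{\dagger}$ holds for \emph{every} diagonal unitary $U_{d,\{\phi_i\}}=\mathrm{diag}(e^{i\phi_1},\ldots,e^{i\phi_{2^N}})$, the same three lines prove the equality $\mathcal{E}_N=\mathcal{D}_N$ for arbitrary diagonal unitaries --- the statement the paper only supports by random sampling in its Remark and Appendix. Your opening paragraph about setting $\xi_i=0$ is harmless but unnecessary for the antiunitary route; the symmetry argument needs no such normalization.
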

\begin{proof}
The output state after \(U_{d,\phi}\) acts on \(|\psi\rangle_{FS}^{N} = \overset{N}{\underset{i=0}{\bigotimes}}\ket{\psi_i}\) takes the form as 
  \(|\psi\rangle_{out}^{N} = U_{d,\phi} |\psi \rangle_{FS}^{N}=\sum_{i_1,i_2,\ldots,i_N=0}^1 a_{i_1i_2\ldots i_N} e^{i \phi_{i_1i_2\ldots i_N}}|i_1i_2\ldots i_N\rangle\)
where \(e^{i \phi_{i_1i_2\ldots i_N}} =1\) except \(e^{i \phi_{11\ldots 1}} = e^{i \phi}\) and the set of coefficients can be written as  \(\{a_{00\ldots 0}=\prod_{i=1}^{N}\cos \frac{\theta_{i}}{2}, a_{00\ldots 1}=e^{i\xi_{3}} \prod_{i=1}^{N-1} \cos \frac{\theta_{i}}{2} \sin \frac{\theta_N}{2}, \ldots, a_{11\ldots 1}=e^{i \phi} e^{i \sum_{i=1}^{N}\xi_{i}} \prod_{i=1}^{N}\sin \frac{\theta_{i}}{2}\}\).
While computing GGM of \(\ket{\psi}_{out}^N\), we again set \(\xi_{i}=0\) as it was found numerically upto \(N=8\) that local phase do not contribute. We find that  \(\mathcal{E}_{N}(U_{d,\phi})=\underset{\ket{\psi}^N_{\text{FS}} \in S_N}{\max} \mathcal{G}(U_{d,\phi} \ket{\psi}^N_{\text{FS}}) =\underset{\theta}{\max}(\frac12 - \frac{1}{2^{2N-1}}\sqrt{A})\) where \(A=\sum_{k=0}^{2N} \alpha_k(N) \cos{k\theta} \nonumber- \beta(N) \cos^4{\frac{\theta}{2}}\cos{\phi} \sin^{2N}{\frac{\theta}{2}}\sum_{k=0}^{N-2}\gamma_k(N) \cos^{k}{\theta}\) which  only depends on \(\cos \phi\) implying \(\mathcal{D}_{N}(U_{d,\phi}) = \mathcal{E}_{N}(U_{d,\phi})\). The detailed calculation is in  Appendix~\ref{sec:pro_2}. Hence, the proof.
\end{proof}

{\it Remark.} Numerical simulations suggest that
for the diagonal unitary operators, \(U_{d,\{\phi_i\}} = \mathrm{diag}(e^{i\phi_1}, e^{i\phi_2}, \ldots, e^{i\phi_8})\), \(U_{d,\{\phi_i\}} = \mathrm{diag}(e^{i\phi_1}, e^{i\phi_2}, \ldots, e^{i\phi_{2^4}})\) and \(U_{d,\{\phi_i\}} = \mathrm{diag}(e^{i\phi_1}, e^{i\phi_2}, \ldots, e^{i\phi_{2^5}})\) acted on three-, four- and five-qubit inputs respectively, the entangling and  disentangling powers remain the same when \(\{\phi_i\}\)s are sampled randomly from normal distributions (see Appendix~\ref{sec:num_same}).

\section{Unequal multipartite entangling and disentangling powers}
\label{sec:inequivalence}

In the previous section, we observed that for diagonal unitary operators, the entangling and disentangling powers are identical. By constructing non-diagonal unitary operators, we now exhibit the inequivalence between multipartite entangling and disentangling powers.

\begin{figure}
\includegraphics [width=\linewidth]{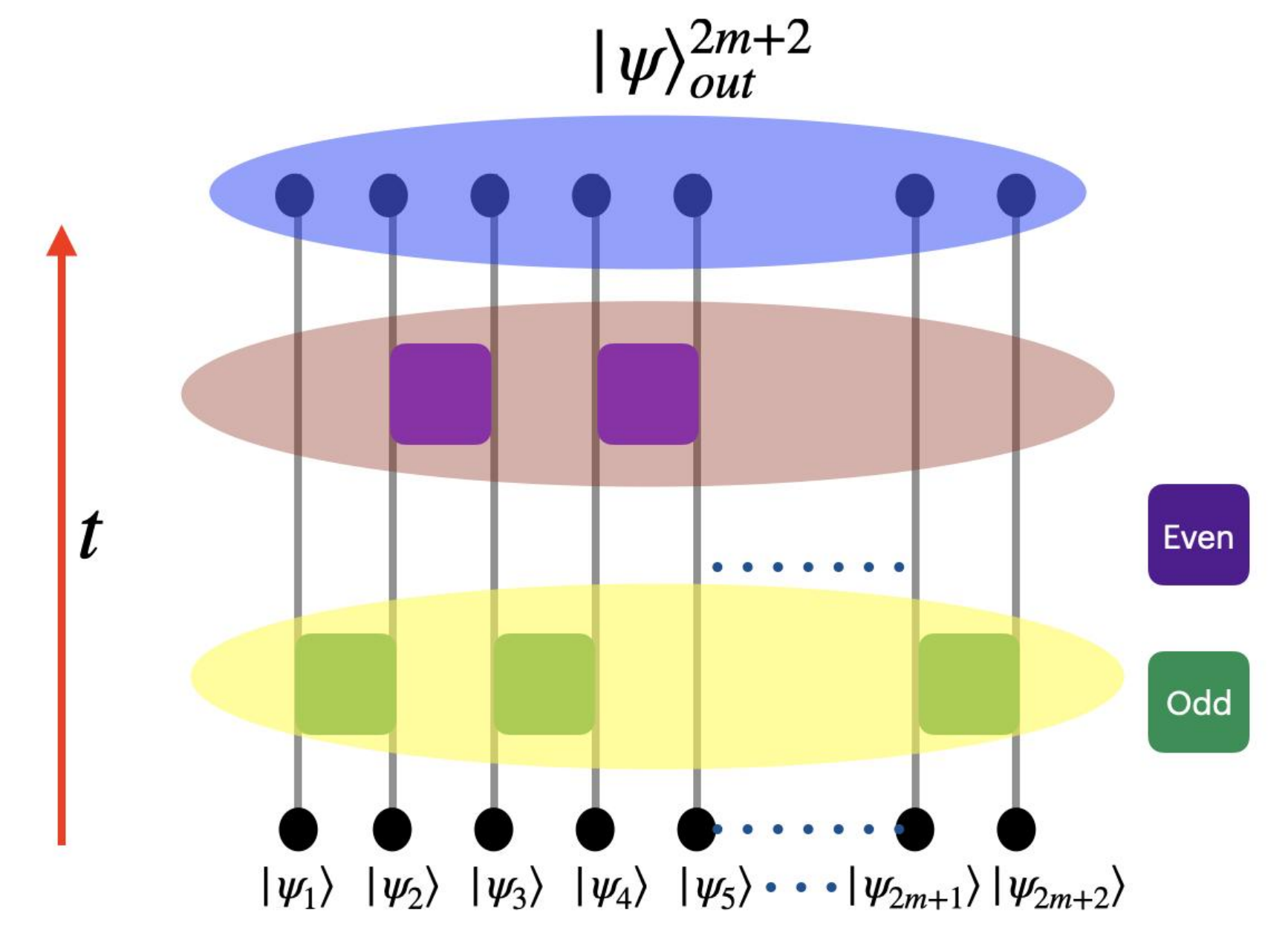}
\caption{{\bf Schematic diagram of the circuit.} 
Illustration of a brickwork quantum circuit with alternating layers of two-qubit unitaries applied in an odd-even pattern. The system begins with a suitable product state, 
\( \ket{\psi}_{in}^{2m+2} = |\psi_1\rangle \otimes |\psi_2\rangle \otimes \cdots \otimes |\psi_{2m+2}\rangle \), 
and evolves through successive applications of two-qubit gates on \text{odd} (green) and \text{even} (purple) sites. After applications of unitaries in these layers at different time steps, the system arrives at a final  state, \( |\psi\rangle_{out}^{2m+2} \) which can, in general, be genuinely multipartite entangled. Similarly, one can apply \(U^{\dagger}\) at each step. Note that we exhibit that there can be a situation where unitaries applied in the even and odd sites can be different and may not commute. 
Time, \(t\), flows upward in the diagram.}
\label{fig:schematic}
\end{figure}

\subsubsection{Construction of non-diagonal unitaries: Even vs odd}
\label{sec:nd_realization}

It is known that in the case of unitary operators acting on a qubit-qutrit input product state, the entangling power can reach its maximum, while the disentangling power remains strictly lower \cite{Linden2009_prl} than the maximum, showing their disparity from the perspective of entanglement generation. As the number of parties increases, determining whether the entangling and disentangling powers can differ becomes increasingly nontrivial due to the increase of dimension. Let us first show that they are unidentical when the number of sites on which they act is even.\\
\\
\textbf{Construction \(1.\)} {\it Let us design a family of non-diagonal unitary operators, \(U_{ND}(\lambda)\) for which the entangling power, \(\mathcal{E}^{\mathrm{even}}_{N}(U_{ND}(\lambda))\) and disentangling power, \(\mathcal{D}^{\mathrm{even}}_{N}(U_{ND}(\lambda))\) are not equal where \(U_{ND}(\lambda)\) acts on an optimal FS states having even number of parties.}

To demonstrate this, we consider a two-qubit unitary operator, which acts on two neighboring parties (see Fig.~\ref{fig:schematic}) as
\begin{eqnarray}
  \nonumber U_{i,i+1}(\lambda)&=&\ket{00}\bra{00}+  \ket{11}\bra{11}+\cos{\lambda}\big(\ket{01}\bra{01} + \ket{10}\bra{10}\big) \nonumber\\&+&\sin{\lambda}\big(\ket{01}\bra{10}-\ket{10}\bra{01} \big),\forall i \in\{1,\ldots, N-1\},\nonumber\\
     \label{eq:so_3}
\end{eqnarray}
where \(0\leq\lambda\leq 2\pi\). Using this, a global unitary operator can be constructed as
\begin{eqnarray}
    U_{ND}(\lambda)=\bigg(\prod_{i\in odd} U_{i,i+1}(\lambda)\bigg)\bigg(\prod_{i\in even} U_{i,i+1}(\lambda)\bigg),
    \label{eq:unitary_even}
\end{eqnarray}
which after the action on the input states with even number of parties, \(N=2m+2\) (\(m\) is a non-negative integer), leads to the output state,
\(|\psi\rangle_{out}^{\mathrm{even}}=U_{ND}(\lambda)|\psi\rangle_{in}\), where \(|\psi\rangle_{in}=\bigotimes\limits_{i=1}^{2m+2}(\cos{\theta_i}\ket{0} + e^{i\xi_i}\sin{\theta_i}\ket{1})\) and similarly after \(U_{ND}^{\dagger}(\lambda)\). Interestingly, we observe that the optimal input states in the cases of \(U_{ND}(\lambda)\) and \(U_{ND}^{\dagger}(\lambda)\) have some specific structures. Specifically, in the case of \(U_{ND}(\lambda)\), \(|\psi\rangle_{in}^{opt}=|\psi_1\rangle \overset{2m+1}{\underset{i=2}{\bigotimes}}|\psi_i\rangle \otimes |\psi_{2m+2}\rangle\) where \(|\psi_1\rangle=|\psi_{2m+2}\rangle=\cos\frac{\theta_1}{2} |0\rangle + \sin \frac{\theta_1}{2} |1\rangle\) and \(|\psi_i\rangle=\cos \frac{\theta_i}{2} |0\rangle + \sin \frac{\theta_i}{2} |1\rangle\)  while for \(U_{ND}^{\dagger}(\lambda)\), the optimal state is \(|\psi\rangle_{in}^{opt}=\underset{i\in odd}{\bigotimes}|\psi_{i}\rangle \underset{j\in even}{\bigotimes}|\psi_{j}\rangle\) where \(i\) and \(j\) denote odd and even integer sites, respectively and \(|\psi_i\rangle=\cos\frac{\theta}{2} |0\rangle + \sin \frac{\theta}{2} |1\rangle\) and similarly \(|\psi_j\rangle\) with \({\theta}^{'}\). With these symmetries in hand, optimization over the \(2m+2\) parameters in the case of \(U_{ND}^{\dagger}(\lambda)\) reduces to two parameters, although such reduction is not possible for \(U_{ND}(\lambda)\) which still involves ($2m+1$)-parameter optimization. After performing this maximization numerically, we report that \(\mathcal{E}_{N}^{\mathrm{even}}(U_{ND}(\lambda))\neq\mathcal{D}_{N}^{\mathrm{even}}(U_{ND}(\lambda))\) for a finite range of \(\lambda\), especially when \(\lambda \in (\frac{\pi}{4}, \frac{\pi}{2})\) and \(\lambda \in (\frac{\pi}{2}, \frac{3\pi}{4})\) (for illustration, see Fig.~\ref{fig:diff_ggm_all} for four- and six- parties). The similar behavior with increasing number of sites strongly suggests  that this remains true for an arbitrary even \(N\) that  we also check numerically upto \(N=10\). 
Interestingly, we find that when the number of parties on which the unitary operator acts is odd, \(U_{ND}(\lambda)\) and \(U_{ND}^{\dagger}(\lambda)\) in Eq.~(\ref{eq:unitary_even}) possess identical entangling and disentangling powers. Note further that for inputs with even and odd number of parties, \(U_{ND}(\lambda)\) in Eq. (\ref{eq:unitary_even}) contains odd and even number of unitaries respectively, thereby illustrating a distinct differences between their capabilities. To overcome this hurdle and to make the entire exercise independent of the dimension, we identify another class of non-diagonal unitary operators, which serve  the purpose when they act on odd number of parties.\\

\textbf{Construction \(2.\)} Let us illustrate the structure of the unitary operator, $U_{ND}(\lambda)$ which operates on odd numbers of qubits to provide unidentical entangling and disentangling powers. First, we define a two-qubit unitary denoted as $U_w$, acting  on \((2m,2m+1)\)-qubit pair by using two nonorthogonal states, \(\ket{\beta} =\frac{\sqrt{3} }{2}\ket{1}-\frac{1}{2}\ket{0}, \ket{\gamma}  =-\frac{\sqrt{3}}{2}\ket{1}-\frac{1}{2}\ket{0}\), resulting in a set of orthonormal two-qubit states,
 \(\ket{ w_{01}} =\frac{\omega ^2 \ket{\gamma}\ket{1} +\omega  \ket{\beta} \ket{0}]}{\sqrt{2}}, \ket{w_{00}}=\frac{\ket{\gamma}\ket{1}+\ket{\beta }\ket{0}}{\sqrt{2}}, 
\ket{w_{10}}=\frac{\ket{\gamma_t}\ket{1}+\ket{\beta_t}\ket{0}}{\sqrt{2}}\) and \(\ket{w_{11}}=\frac{\omega ^2 \ket{\gamma_t}\ket{1}+\omega  \ket{\beta_t}\ket{0}}{\sqrt{2}}\) with \(\omega=e^{i\pi}\) and $\ket{\beta_t}$ and $\ket{\gamma_t}$ being the orthogonal states of $\ket{\beta}$ and $\ket{\gamma}$ respectively.
Based on these basis states, the two-qubit unitary operator $U_w$ takes the form \cite{Linden2009_prl} as 
\begin{eqnarray}
\nonumber U_w &=& \ket{w_{01}}\bra{01}+\ket{w_{10}}\bra{10}+\ket{w_{11}}\bra{11}-i \ket{w_{00}}\bra{00},
\label{eq:winter}
\end{eqnarray}
which leads to a family of a unitary operators in the \(2^{2m+1}\)-dimensional space as
\begin{eqnarray}
    U_{ND}(\lambda)=\bigg(\underset{i\leq 2m}{\underset{i\in odd}{\prod}} U_{i,i+1}(\lambda)\bigg)\bigg(\underset{i<2m}{\underset{i\in even}{\prod}} U_{i,i+1}(\lambda)\bigg)U_w.
    \label{eq:unitary_odd}
\end{eqnarray}
Here, we assume that \(U_w\) acts on the last two sites, \(2m\) and \(2m+1\) which can be any two neighboring qubits. Note that instead of \(U_w\), one can generate any four-dimensional random unitary and construct \(U_{ND}(\lambda)\) in Eq.~(\ref{eq:unitary_odd}). In all these situations, our numerical maximization reveals that their exists a range of \(\lambda\) for which \(\mathcal{E}_{2m+1}^{\mathrm{odd}}\big(U_{ND}(\lambda)\big)\neq\mathcal{D}_{2m+1}^{\mathrm{odd}}\big(U_{ND}(\lambda)\big)\). In the case of three-qubit FS inputs, when\(\frac{\pi}{4}\lesssim \lambda <\pi\) and \(\pi< \lambda \lesssim\frac{3\pi}{4}\), the above disparity can be found although the range changes slightly with the number of parties involved (see Fig.~\ref{fig:diff_ggm_all} \((c)\) and \((d)\)). Note that instead of this systematic construction, if one generates random unitary operators in respective dimensions, the entangling and disentangling powers also become unequal (see Appendix~\ref{sec:inequ_app_c}). Such a precise formulation of unitaries have some benefits from the perspective of the implementation in different physical platforms as will be demonstrated next.

\begin{figure}
\includegraphics [width=\linewidth]{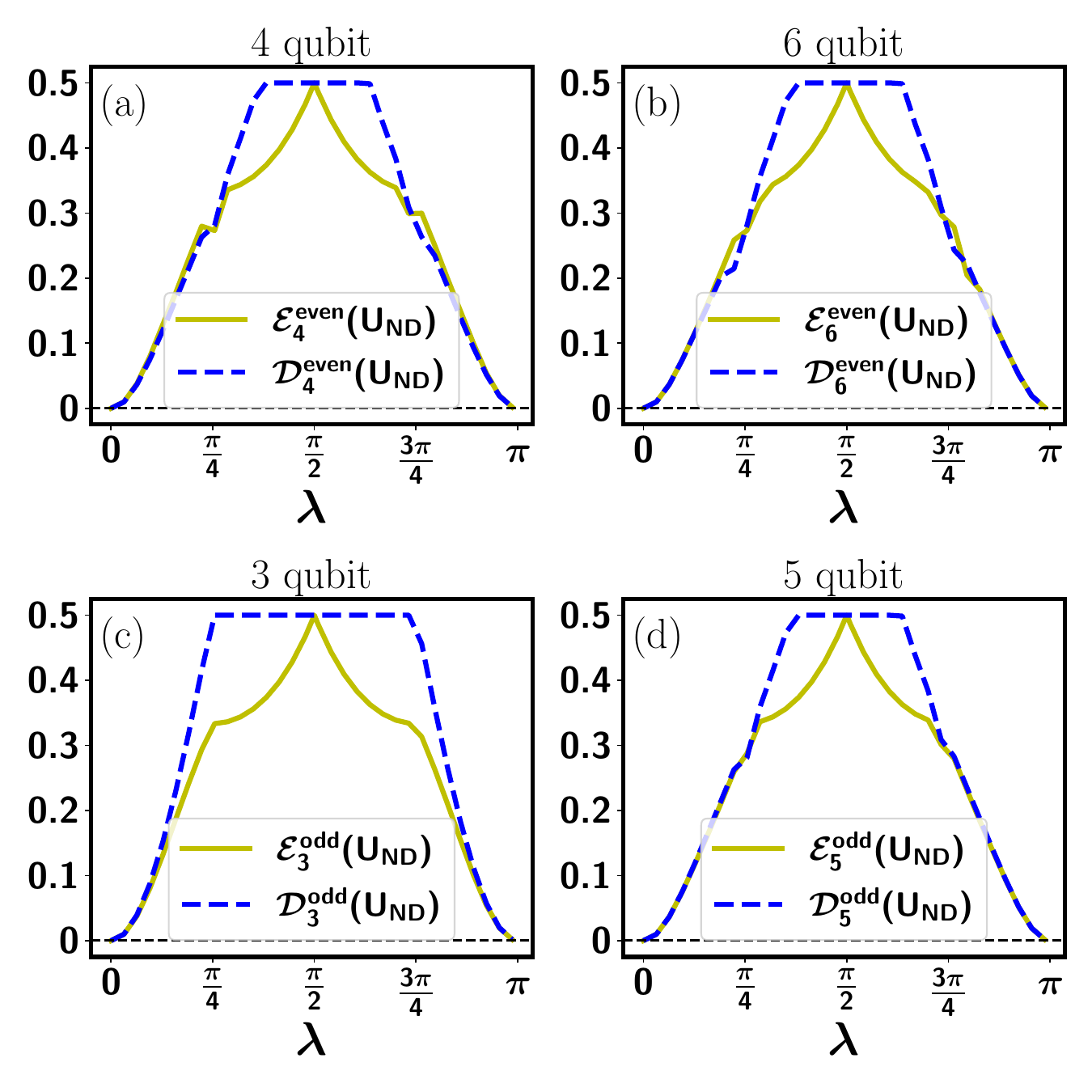}
\caption{\textbf{Inequivalence between multipartite entangling and disentangling powers for non-diagonal unitaries.}  Plot of $\mathcal{E}_{N}^{\mathrm{even(odd)}}(U_{ND}(\lambda))$ (solid line), and $\mathcal{D}_{N}^{\mathrm{even(odd)}}(U_{ND}(\lambda))$ (dashed line) (ordinate) against $\lambda$ (abscissa) for non-diagonal unitary operators $U_{ND}(\lambda)$. (a) and (b) correspond to the output state with even number of qubits, $N = 4$ and $N = 6$ respectively after the action of $U_{ND}(\lambda)$ in Eq. (\ref{eq:unitary_even}). (c) and (d) represent the same for odd number qubits, $N = 3$ and $N = 5$ respectively with $U_{ND}(\lambda)$ in Eq. (\ref{eq:unitary_odd}) . Both axes are dimensionless.}
  
\label{fig:diff_ggm_all}
\end{figure}

\subsubsection{Simulating unitaries through interacting Hamiltonians}
\label{sec:unitaryHamil}

Let us illustrate how the unitary operator, described in Eqs.~(\ref{eq:unitary_even}) and (\ref{eq:unitary_odd}) for even and odd sites, respectively, can be simulated on the currently available experimental platforms, like cold atoms in optical lattices and trapped ions \cite{Cirac1995_prl, Molmer1999, Porras2004, Bruzewicz2019, Michael2024, Jaksch1999, Friesen2007OnewayQC, Dai2016, Schfer2020, Zhang2023}.  
{\it Dynamics with even number of sites. } We first present our results for an even number of sites, followed by the system with odd sites. Starting from the initial optimal product state described before for $U_{ND}(\lambda)$ and $U_{ND}^\dagger(\lambda)$, \footnote{ To compute \(\mathcal{E}_{N}^{\mathrm{even}}(U_{DM})\) and \(\mathcal{D}_{N}^{\mathrm{even}}(U_{DM})\), we again optimize over the set of fully separable initial states, and in this case, they turn out to be same as obtained for \(U_{ND}(\lambda)\) and \(U_{ND}^{\dagger}(\lambda)\) with \(N=4, 6\).}  a system evolves under a unitary operator, described by \cite{Dzyaloshinsky1958, Moriya1960, Moriya1960_prl, Jafari2008} (see Fig.~\ref{fig:schematic})
\begin{eqnarray}
    U_{\mathrm{DM}} &=& e^{-iH^{\mathrm{odd}}_{\mathrm{DM}}t}e^{-iH^{\mathrm{even}}_{\mathrm{DM}}t},
\end{eqnarray}
where the Hamiltonians involve Dzyaloshinskii–Moriya (DM) interaction \cite{Dzyaloshinsky1958, Moriya1960}, represented as 
\begin{eqnarray}
H^{\mathrm{odd}}_{\mathrm{DM}} &=& \frac{1}{2}\sum_{i\in\mathrm{odd}} (\sigma^y_i \sigma^x_{i+1} - \sigma^x_i\sigma^y_{i+1}),
\end{eqnarray}
with $\sigma^\alpha (\alpha = x,y,z)$ being the Pauli matrices. Similarly, we can construct \(H^{\mathrm{even}}_{\mathrm{DM}}\), when \(i \in \mathrm{odd}\).
 Note that \(H^{\mathrm{even}}_{\mathrm{DM}}\) and \(H^{\mathrm{odd}}_{\mathrm{DM}}\) do not commute. Interestingly, to present \(\mathcal{E}_{N}^{\mathrm{even}}(U_{\mathrm{DM}})\neq \mathcal{D}_{N}^{\mathrm{even}}(U_{DM})\), we introduce a quantity, \(\Delta^{\mathrm{DM}}_{N}=|\mathcal{E}(U_{\mathrm{DM}})-\mathcal{D}(U_{\mathrm{DM}})|\) quantifying their unequalness and find its  nonvanishing and nonmonotonic behaviors with respect to \(0\leq t \leq \pi\) except \(t=\frac{n\pi}{2}\) (\(n=0,1,2\)) (see Fig.~\ref{fig:diff_even_odd_hm}).
 
\begin{figure}
\includegraphics [width=\linewidth]{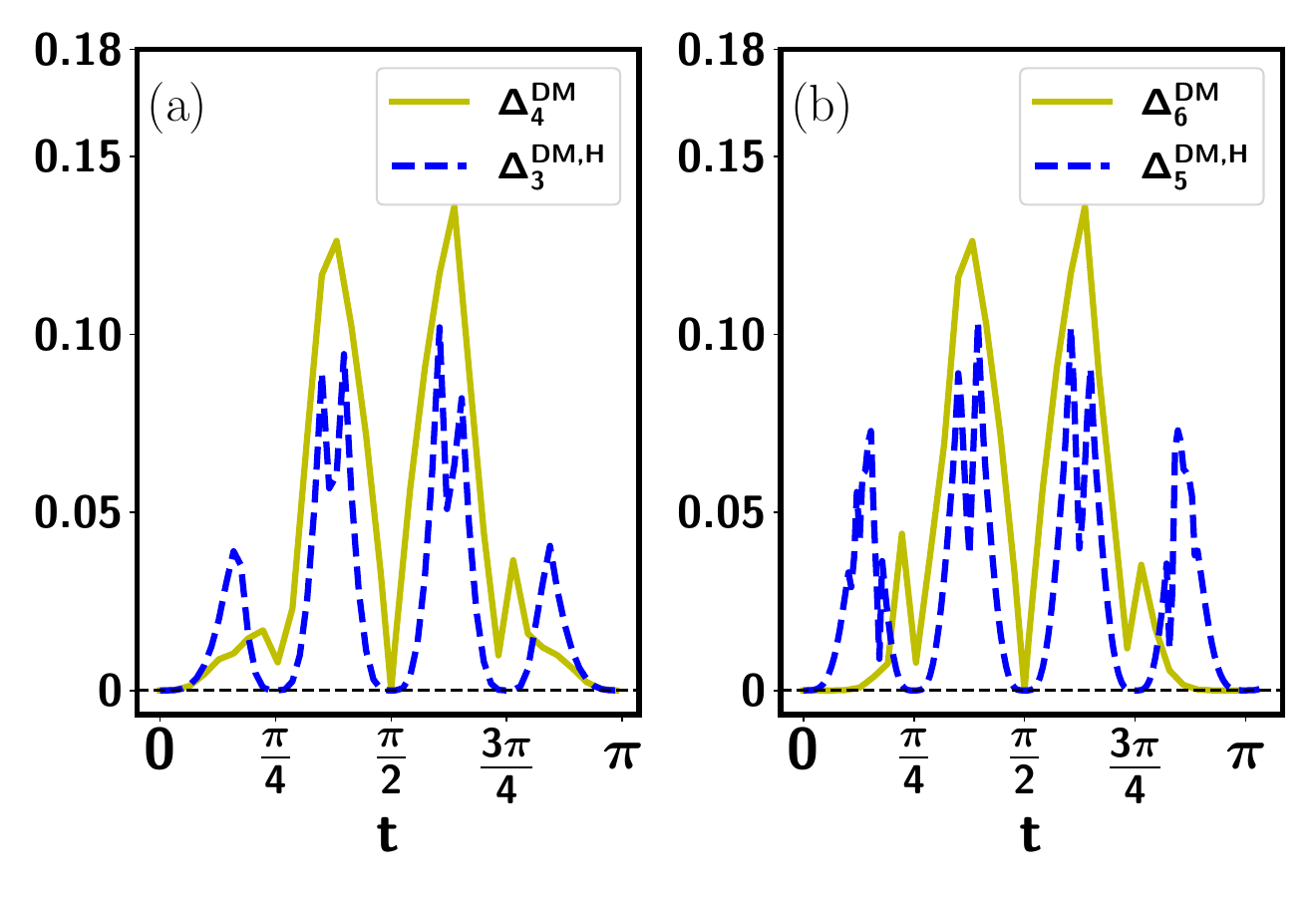}
\caption{ \textbf{Absolute difference between entangling and disentangling powers under engineered Hamiltonians.}  $\Delta^{\mathrm{DM}}_{N}$ and $\Delta^{\mathrm{DM,H}}_{N}$ (ordinate) with respect to time $t$ (abscissa) for different system sizes. (a) shows results for $N = 3$ (dashed line) and $N = 4$ (solid line), while (b) corresponds to $N = 5$ (dashed) and $N = 6$ (solid). For even sites, $N = 4,\, \text{and}\,\,6$, the dynamics is governed by the  Hamiltonians involving NN Dzyaloshinskii–Moriya (DM) interactions while for odd $N = 3,\, \text{and}\, 5$ sites, the evolution is driven by Hamiltonians with DM and Heisenberg interactions. In all the cases, the initial states are  fully separable which are chosen appropriately. Clearly, multipartite entangling and disentangling powers differ.  Both axes are dimensionless.
}
\label{fig:diff_even_odd_hm} 
\end{figure}

{\it Governing Hamiltonian involving odd sites.} As it was shown that the unitaries which act on odd number of qubits showing dissimilarities are much more involved than the ones for even number of qubits. This can also be visible in the interacting Hamiltonian.
The $(i,i+1)$- pair of the system (with $i$ being the even integer) evolves accoding to the Heisenberg Hamiltonian given by $H_H^{\mathrm{even}}=\underset{i\in even}{\sum}(\sigma^{x}_{i}\sigma_{i+1}^{x}+\sigma^{y}_{i}\sigma_{i+1}^{y}+\sigma^{z}_{i}\sigma_{i+1}^{z})$, followed by the evolution via DM interaction in the $(i+1,i+2)$-bond, i.e., the evolution operator, in this case, reads as 
\begin{eqnarray}
{U_{\mathrm{DM,H}}}&=&  e^{-iH_{\mathrm{DM}}^{\mathrm{odd}}t}e^{-iH_{H}^{\mathrm{even}}t},
\end{eqnarray}
where $H_{\mathrm{DM}}^{\mathrm{odd}}$ and $H_{H}^{\mathrm{even}}$ also do not commute. To find $\Delta_{N}^{\mathrm{DM,H}}=|\mathcal{E}(U_{\mathrm{DM,H}}) - \mathcal{D}(U_{\mathrm{DM,H}})|>0$ for certain values of the evolution time, \(0\leq t\leq \pi\) except \(t=\frac{n\pi}{4}\) for \(n=0,1,2,3,4\) (see Fig.~\ref{fig:diff_even_odd_hm}), we here notice that \(U_{DM,H}\) and $U_{DM,H}^\dagger$ require different initial FS state than the one obtained in Construction \(2\).\\
\\
{\it \textbf {Randomized circuit design for illustration.}} The entire analysis reveals that when two noncommuting unitary operators are applied successively, disparity between powers emerges. Let us describe a setup of a quantum circuit comprising two-qubit gates having multiple layers (see Fig.~\ref{fig:schematic}).
(i) The first layer consists of two-qubit gates acting on all odd-numbered pairs of qubits, e.g., $(1,2)$-, $(3,4)$-pair etc., forming a single global unitary operator, generated by an interacting Hamiltonian, $H_{\mathrm{odd}} = \underset{i\in odd}{\sum}H^a_{i,i+1}$, where \(H^a_{i,i+1}\) is either the Hamiltonian described before or a random Hermitian operator. (ii) The two-qubit gates in the second layer act on even-numbered pairs, i.e.,  on $(2,3)$-, $(4,5)$- pairs etc., forming another global unitary, described by $H_{\mathrm{even}}=\underset{i\in even}{\sum}H^b_{i,i+1}$ where \(H_{i,i+1}^b\) can be taken in a similar manner as  in \(H_{i,i+1}^a\).\\  
 For systems with an even number of qubits, using the \emph{same} two-qubit Haar-random unitary  or Hamiltonian with DM interaction applied in both  layers already results in two noncommuting global unitaries, leading to a clear asymmetry between their entangling and disentangling powers. However, in this case,  these two powers of the resulting global unitaries remain distinct even when \(H_{i,i+1}^{a}\) and \(H_{i,i+1}^{b}\) are the same or commuting. Surprisingly, when the total number of sites is odd, we need to employ two different sets of Haar-random two-qubit gates for the two layers or unequal (or non-commuting) \(H^a_{i,i+1}\) and \(H_{i,i+1}^{b}\)  so that a clear difference between multipartite entangling and disentangling powers emerge. It highlights that for odd system-size, more stringent criteria are required to observe inequivalence between these two powers.

\section{Conclusion}
\label{sec:conclusio}

In quantum information processing, entanglement is a central resource responsible for providing quantum advantages in protocols, such as dense coding, teleportation, quantum cryptography and one-way quantum computation. Yet, the ability to disentangle - effectively reducing or erasing entanglement - is equally crucial in tasks like decoherence control, environment decoupling, and secure communication. In multipartite systems, where entanglement structures are more complex and less understood, a fundamental question arises: Does a given unitary operator have greater power to entangle or to disentangle? Our investigation addresses this question by revealing that, despite the mathematical reversibility of unitary operations, their physical action on multipartite entanglement can exhibit a pronounced asymmetry. 
This feature also highlights the need to account for directional entanglement dynamics in the design and implementation of quantum processes and devices.

We defined the multipartite entangling and disentangling powers of a unitary operator based on the genuine multipartite entanglement measure. We began by showing analytically that  the entangling and disentangling powers for a specific class of diagonal unitaries — those with just the last diagonal element containing a phase —  are equal. Numerical studies confirmed that this equivalence also holds for general diagonal unitaries in three-, four-, and five- qubit systems. However, when we moved beyond diagonal unitary operators to a more complex, and specially constructed non-diagonal unitaries, we observed a fundamental asymmetry: the ability of a unitary operator to generate entanglement can differ significantly from its ability to remove it. This inequivalence was evident when optimization was performed over the set of fully separable input states. In order to validate this observation in physical systems, we demonstrated that such asymmetry can arise from unitary evolution governed by  Hamiltonians, that involve Dzyaloshinskii–Moriya (DM) and Heisenberg \(XXX\) interactions. Importantly, we observed a qualitatively different behavior depending on whether the number of qubits on which unitary operators act is even or odd. For systems with an even number of qubits, applying two successive non-commuting layers of two-qubit Haar-random unitaries - acting respectively on even- and odd- numbered pairs - was sufficient to reveal a clear disparity between entangling and disentangling power. However, in the case of odd-numbered systems, this difference only emerged when the two layers were constructed from \textit{distinct} sets of Haar-random unitaries of two qubits, indicating a more subtle structure in the entanglement dynamics.
This contrast emphasizes the relevance of the system size and pairing structure in determining how entanglement is formed or suppressed during quantum evolution. 
 This insight could be valuable in the development of quantum technologies that exploit or suppress entanglement.

\section*{Acknowledgements}
We acknowledge the use of \href{https://github.com/titaschanda/QIClib}{QIClib} -- a modern C++ library for general purpose quantum information processing and quantum computing (\url{https://titaschanda.github.io/QIClib}) and cluster computing facility at Harish-Chandra Research Institute.

\appendix

\section{Characterizing genuine multipartite entanglement }
\label{sec:gme}

In an \( N \)- partite quantum system, a pure state \( \ket{\psi} \) is said to be \textit{\( k \)-separable} (for \( k = 2, \dots, N \)) if it can be expressed as a product of \( k \) pure states, each defined on mutually disjoint subsets of the full system. Formally,
 \( \ket{\psi} = \bigotimes_{m=1}^{k} \ket{\psi_m}\),
where each \( \ket{\psi_m} \) belongs to one of the \( k \) partitions. Such a state is denoted by \( \ket{\psi}_{k\text{-sep}} \).

\begin{itemize}
    \item When \( k = N \), the state is referred to as \textit{fully separable (FS) state}, and, mathematically, reads as $\ket\psi =\overset{N}{\underset{i=0}{\bigotimes}}\ket{\psi_i}$.
    \item A $N$- party state is  \textit{genuinely multipartite entangled (GME)} if it cannot be written as a product across any bipartition, i.e., it is not separable with respect to any division containing two or more parts of the system.
\end{itemize}

\subsection{Quantifying genuine multipartite entanglement: Geometric measures}
\label{sec:ggm}

We quantify genuine multipartite entanglement content of the output states \cite{aditi2010} produced after the action of a unitary operator from a geometrical perspective. Specifically, we compute generalized geometric measure (GGM) which is defined as \(\mathcal{G} (|\psi\rangle) = 1 - \max\limits_{|\phi\rangle \in S_{nG}} |\langle \phi | \psi \rangle|^2 \), where \(S_{nG}\) denotes the set of all non-genuinely multipartite entangled states. We choose this measure as it can easily be computed for pure states in terms of Schmidt coefficients, as $\mathcal{G}(\ket\psi)= 1-\max \{ \{e_{i_1}^m\},\{e_{i_1i_2}^m\},\ldots,\{e_{i_1i_2 \ldots i_{N/2}}^m\}\}$ where maximization is performed over all sets,
\(\{\{e_{i_1i_2}^m\},\ldots,\{e_{i_1i_2 \ldots i_{N/2}}^m\}\}\) containing maximum eigenvalues of \(l\)- site reduced density matrices (\(l=1,2,\ldots N/2\)) \cite{aditi2010, Wei2003_pra, Blasone2008_pra}.

One can define the entangling power of a unitary operator \( U \) with respect to \( k \)-separable states as \(\mathcal{E}_k(U) = \underset{\ket{\psi}_{k\text{-sep}}^N \in S_k}{\max} \mathcal{G}\left(U \ket{\psi}_{k\text{-sep}}^N\right),\)
where \(\ket{\psi}_{k\text{-sep}}^N = \bigotimes_{i=1}^k \ket{\psi_i}\) denotes a \( k \)-separable pure state from the set \( S_k \). By definition, \(\mathcal{G}(\ket{\psi}) = 0\) for all \(\ket{\psi} \in S_k\). Similarly, the disentangling power is defined as 
\(\mathcal{D}_k(U) = \underset{\ket{\psi}_{k\text{-sep}}^N \in S_k}{\max} \mathcal{G}\left(U^\dagger \ket{\psi}_{k\text{-sep}}^N\right).\)

\section{Entangling and disentangling power of diagonal unitaries}
\label{sec:diag_app1}
\subsection{Proof of Proposition 1.}
\label{sec:pro_1}
The proof proceeds by considering the output state given in Eq. (\ref{eq:sep3_out}).
To calculate the entangling power \(\mathcal{E}_3(U_{d,\phi})=\underset{\ket{\psi}^3_{\text{FS}} \in S_3}{\max} \mathcal{G}(U_{d,\phi} \ket{\psi}^3_{\text{FS}})\), we have to calculate   the reduced density  matrix \(\rho_i\) for each party of the given state in Eq. (\ref{eq:sep3_out}) and  $\rho_i$, can take the form as 
\begin{equation}
    \rho_i = \begin{pmatrix}
    a_i & b_i\\
    b^*_i & c_i
    \end{pmatrix},
\end{equation}
where $a_i,b_i$ and $c_i$ are as follows:
\begin{eqnarray}
 \nonumber   a_i &=&{\cos^2\frac{\theta_i}{2}} ,\\
  \nonumber  b_i &= &\cos\frac{\theta_i}{2} \sin\frac{\theta_i}{2} \left( \sum_{\substack{j=1 \\ j \neq i}}^{3-\min(i,2)}
 \cos^2\frac{\theta_j}{2} + r_1\sin^2\frac{\theta_j}{2}  \right),
\\
    c_i&=& {\sin^2\frac{\theta_i}{2}},
\end{eqnarray}
where \(r_1=\left( \overset{3}{\underset{\substack{k=1 \\ k \neq j,i}}{\sum}} \cos^2\frac{\theta_k}{2}+ e^{-i\phi} \sin^2\frac{\theta_k}{2} \right)\).
The eigenvalues $\lambda_\pm$ of $\rho_i$ are given by 
\begin{equation}
    \lambda_\pm =\frac{1}{2}\pm\frac{\sqrt{1-4(a_ic_i-|b_i|^2)}}{2}.
    \label{eq:eigval}
\end{equation}
To maximize \(\mathcal{E}_{3}(U_{d,\phi})\), the term  $(a_ic_i-|b_i|^2)$ must be maximized, i.e.,
\begin{eqnarray}
  \nonumber  \frac{d(a_ic_i-|b_i|^2)}{d\theta_j} &=&0, \forall i,j\in\{1,2,3\}\\  \frac{d^2(a_ic_i-|b_i|^2)}{d\theta_j^2} &<&0.
  \label{eq:der_3qubit}
\end{eqnarray}
From Eq.~(\ref{eq:der_3qubit}), one can arrive at  conditions,
\begin{eqnarray}
\nonumber 1+\cos{\theta_{i+1}}+\cos{\theta_{i+2}}-\cos{\theta_{i+1}}\cos{\theta_{i+2}} = 0\\ \forall i \in\{1,2,3\},
     \label{eq:con_der1}
\end{eqnarray}
and
\begin{eqnarray}
  \nonumber  -\frac{1}{2} \sin^2\theta_i \sin^2\theta_{i+1} \sin^4\frac{\theta_{i+2}}{2} \sin^2\frac{\phi}{2}<0,
 \\\nonumber \forall i\in \{1,2,3\}.
\end{eqnarray}
Note that from Eq. (\ref{eq:con_der1}),
we reach at the condition  \(\cos\theta_1=\cos\theta_2=\cos\theta_3\), i.e., \(\theta_1=\theta_2=\theta_3\) and the eigenvalues takes the form as 
\begin{eqnarray}
\lambda_\pm&=&\frac{1}{2}\pm \frac{1}{32}\sqrt{A}.
    \label{eq:eigval_3sep}
\end{eqnarray}
Here \(A\) can be expressed as
\begin{widetext}
\begin{eqnarray}
 \nonumber A &=&218 + 16 \cos\theta + 49 \cos2 \theta - 24 \cos3 \theta - 10 \cos 4 \theta + 8 \cos 5 \theta - \cos 6 \theta   - 1024 \cos^4\frac{\theta}{2}\left(-3 + \cos\theta\right) \cos\phi \sin^6\frac{\theta}{2}.
   \label{eq:ggm_A}
\end{eqnarray} 
\end{widetext}
Thus the entangling power, \(\mathcal{E}_3(U_{d,\phi})=\underset{\theta}{\max}(\frac{1}{2}-\frac{1}{32}\sqrt{A})\).
On the other hand, the output state after \({U^{\dagger}_{d,\phi}}\) acts on \(|\psi\rangle^3_{FS}\) takes the form as
\begin{eqnarray}
   \nonumber |\psi\rangle_{out}^{3} &=& U^{\dagger}_{d,\phi} |\psi \rangle_{FS}^{3}=\sum_{i_1,i_2,i_3=0}^1 a_{i_1i_2i_3} e^{-i \phi_{i_1i_2i_3}}|i_1i_2i_3\rangle ,
\label{eq:sep_3_output}
\end{eqnarray}
where \(e^{-i \phi_{i_1i_2i_3}} =1\) except \(e^{-i \phi_{111}} \equiv e^{-i \phi}\) and  \(a_{000}=\prod_{i=1}^{3}\cos \frac{\theta_{i}}{2}, a_{001}=e^{i\xi_{3}} \prod_{i=1}^{2} \cos \frac{\theta_{i}}{2} \sin \frac{\theta_3}{2}\) and \(a_{111}=e^{-i \phi} e^{i \sum_{i=1}^{3}\xi_{i}} \prod_{i=1}^{3}\sin \frac{\theta_{i}}{2}\) and so on.
Interestingly, the eigenvalues of the reduced density matrix of the output state, \(|\psi\rangle_{out}^{3}\), are the same as in Eq. (\ref{eq:eigval}). Therefore, we can push similar conditions as in Eq. (\ref{eq:der_3qubit}) and can get the same result as in Eq. (\ref{eq:con_der1}). Therefore, the condition  arises as \(\cos \theta_1 = \cos \theta_2 = \cos \theta_3\), i.e., \(\theta_1=\theta_2=\theta_3\). Thus, the disentangling power, \(\mathcal{D}_3(U_{d,\phi}) =\underset{\theta}{\max}(\frac{1}{2}-\frac{1}{32}\sqrt{A})\). This completes the proof of Proposition \ref{prop:diag_fullsep}.

\subsection{Proof of Proposition 2.}
\label{sec:pro_2}

The proof proceeds by considering the output state, given by \(|\psi\rangle_{out}^{N} = U_{d,\phi} |\psi \rangle_{FS}^{N}=\sum_{i_1,i_2,\ldots,i_N=0}^1 a_{i_1i_2\ldots i_N} e^{i \phi_{i_1i_2\ldots i_N}}|i_1i_2\ldots i_N\rangle\). 
To calculate the entangling power \(\mathcal{E}_N(U_{d,\phi})=\underset{\ket{\psi}^N_{\text{FS}} \in S_N}{\max} \mathcal{G}(|\psi\rangle_{out}^{N})\), we have to calculate   the reduced density  matrix \(\rho_i\) for each party of the given state \(|\psi\rangle_{out}^{N}\) and  $\rho_i$, can take the form as 
\begin{equation}
    \rho_i = \begin{pmatrix}
    a_i & b_i\\
    b^*_i & c_i
    \end{pmatrix},
\end{equation}
The eigenvalues \(\lambda_\pm\) of the reduced matrix \(\rho_i\) is exactly same as Eq. (\ref{eq:eigval}), i.e, \(\lambda_{\pm}=\frac{1}{2}\pm\frac{\sqrt{1-4(a_ic_i-|b_i|^2)}}{2}\) where \(i\in \{1,2,...N\}\) and \(a_i,b_i\) and \(c_i\) take the following forms:
\begin{widetext}
\begin{eqnarray}
 \nonumber   a_i &=&{\cos^2\frac{\theta_i}{2}}, \\
  \nonumber  b_i &= &
 \begin{cases}
\cos\frac{\theta_i}{2} \sin\frac{\theta_i}{2} \left(
 \cos^2\frac{\theta_1}{2} + \sin^2\frac{\theta_1}{2}(\cos^2\frac{\theta_2}{2}+ \sin^2\frac{\theta_2}{2}(\cos^2\frac{\theta_3}{2}+\ldots \sin^2\frac{\theta_{i-1}}{2}(\cos^2\frac{\theta_{i+1}}{2}+\ldots \sin^2\frac{\theta_{N-1}}{2}(\cos^2\frac{\theta_{N}}{2}+e^{i\phi}\sin^2\frac{\theta_{N}}{2})\ldots \right),\\ \quad \,\,\,\,\,\,\,\,\,\,\,\,\,\,\text{for } i \neq N \\
\cos\frac{\theta_N}{2} \sin\frac{\theta_N}{2} \left(
 \cos^2\frac{\theta_1}{2} + \sin^2\frac{\theta_1}{2}(\cos^2\frac{\theta_2}{2}+ \sin^2\frac{\theta_2}{2}(\cos^2\frac{\theta_3}{2}+\ldots + \sin^2\frac{\theta_{N-2}}{2}(\cos^2\frac{\theta_{N-1}}{2}+e^{i\phi}\sin^2\frac{\theta_{N-1}}{2})\ldots \right),~~~~ \text{for } i = N
\end{cases}
\\
 c_i&=& {\sin^2\frac{\theta_i}{2}}.
\end{eqnarray}
\end{widetext}
\textbf{Note:} We numerically find out that \(\theta_1=\theta_2=\theta_3=...=\theta_N = \theta, \, \text{say}\), and the eigenvalues read as
\begin{eqnarray}
\lambda_{\pm}&=&\frac12\pm\frac{1}{{2}^{2N-1}}\sqrt{A},
    \label{eq:N_ggm}
\end{eqnarray}
\begin{eqnarray}
   \nonumber \text{where}\, A&=&\sum_{k=0}^{2N} \alpha_k(N) \cos{k\theta} \\\nonumber&&- \beta(N) \cos^4{\frac{\theta}{2}}\cos{\phi} \sin^{2N}{\frac{\theta}{2}}\sum_{k=0}^{N-2}\beta_k(N) \cos^{k}{\theta}.
\end{eqnarray}
Thus, the entangling power, \(\mathcal{E}_N(U_{d,\phi})=\underset{\theta}{\max}(\frac{1}{2}-\frac{1}{2^{2N-1}}\sqrt{A})\) is exactly same as the disentagling power, \(\mathcal{D}_N(U_{d,\phi})\), since only \(\cos \phi\)-term is involved in the expression. Hence, this completes the proof of Proposition~\ref{prop:alldiag_fullsep}.

\begin{figure}[ht]
\includegraphics[width=0.8\linewidth, height=6cm]{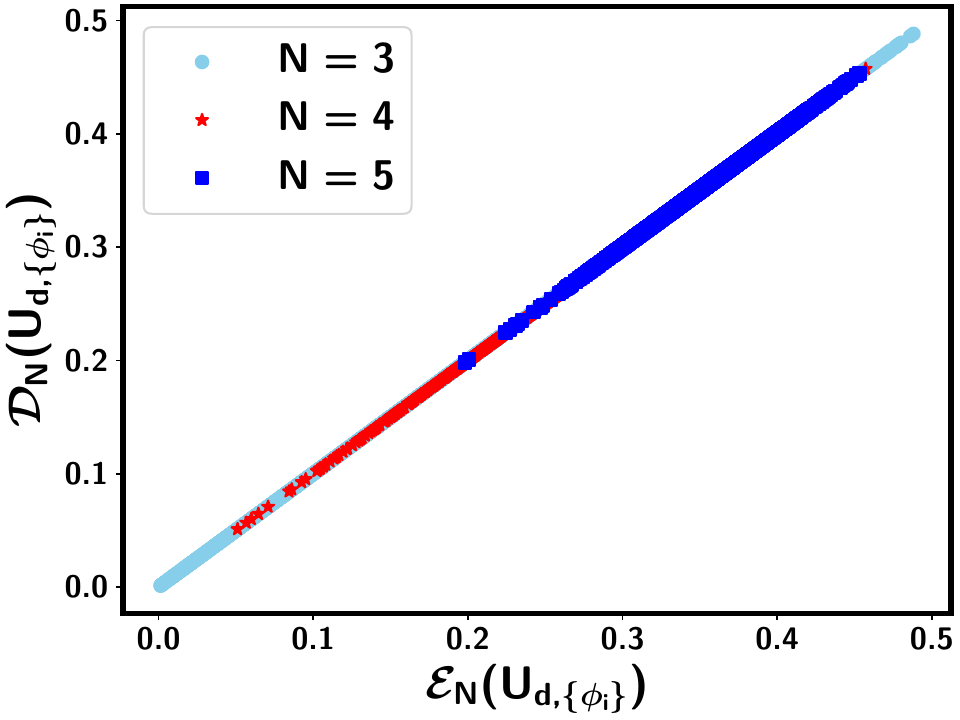}
\caption{\textbf{Equivalence between entangling and disentangling powers corresponding to diagonal unitaries.} Disentangling power, \(\mathcal{D}_N(U_{d,\{\phi_i\}})\) (vertical axis) versus entangling power, \(\mathcal{E}_N(U_{d,\{\phi_i\}})\) (horizontal axis) for randomly generated diagonal unitaries, where the optimization is performed over the fully separable states \(S_N\) for \(N = 3, 4, 5\). Here \(\{\phi_i\}\)s are chosen randomly from the Gaussian distribution with zero mean and unit standard deviation. The data points lie along the diagonal, indicating that \(\mathcal{E}_N(U_{d,\{\phi_i\}}) = \mathcal{D}_N(U_{d,\{\phi_i\}})\) holds for each value of \(N\). Both axes are dimensionless.
}

\label{fig:3_4_5qubit_diag_ggm} 
\end{figure}

\subsection{Entangling and disentangling powers coincide for random diagonal unitaries}
\label{sec:num_same}

Consider an eight-dimensional diagonal unitary operator, 
\( U_{d,\{\phi_i\}} = \mathrm{diag}(e^{i\phi_1}, e^{i\phi_2}, \ldots, e^{i\phi_8}) \), 
where each \(0\leq \phi_i\leq 2\pi\) is independently drawn from a Gaussian distribution, \(G(0,1)\) with vanishing mean  and unit standard deviation. We numerically simulate \(10^4\) randomly generated diagonal unitaries and observe that, for each case, the entangling 
\(\mathcal{E}_3(U_{d,\{\phi_i\}})\) and the disentangling 
\(\mathcal{D}_3(U_{d,\{\phi_i\}})\) powers match when the optimization (via numerical algorithm ISRES \cite{Runarsson2005}) is performed over the set of fully separable states \(S_3\), as illustrated in Fig.~\ref{fig:3_4_5qubit_diag_ggm}.
Notice that unlike the case of $U_{d,\phi}$ (a single-parameter diagonal unitaries), the optimal initial states in $\mathcal{E}_3(U_{d,\{\phi_i\}})$ and $\mathcal{D}_{3}(U_{d,\{\phi_i\}})$
depend on all $\theta_i$s which makes the optimization difficult although we notice that the optimal state are same in $\mathcal{E}_3(U_{d,\{\phi_i\}})$ and $\mathcal{D}_{3}(U_{d,\{\phi_i\}})$.

We have also numerically simulated $2^4$ - and $2^5$ - dimensional arbitrary diagonal unitaries by choosing  $\{\phi_i\}$s again from normal distributions and optimize over four- and five-qubit FS states, i.e., over $\{\theta_i\}_{i=1}^4$ and $\{\theta_i\}_{i=1}^5$ respectively. In both situations, we again find that local phases, $\{\xi_i\}$s, in the FS states do not contribute to optimization. The entangling powers, \(\mathcal{E}_4(U_{d,\{\phi_i\}})\) and \(\mathcal{E}_5(U_{d,\{\phi_i\}})\) coincide with their respective disentangling powers, \(\mathcal{D}_4(U_{d,\{\phi_i\}})\) and \(\mathcal{D}_5(U_{d,\{\phi_i\}})\) when optimized over the sets of fully separable states \(S_4\) and \(S_5\) (see Fig.~\ref{fig:3_4_5qubit_diag_ggm}). These numerical simulations possibly suggest that, like a single-parameter family of diagonal unitary operators, entangling and disentangling powers of arbitrary diagonal unitaries possibly involve only $\cos\phi_i$-like terms  and hence they match. It also indicates that such an equivalence may hold for arbitrary diagonal unitary operators acted on arbitrary number of qubits.

\begin{figure}
\includegraphics [width=\linewidth]{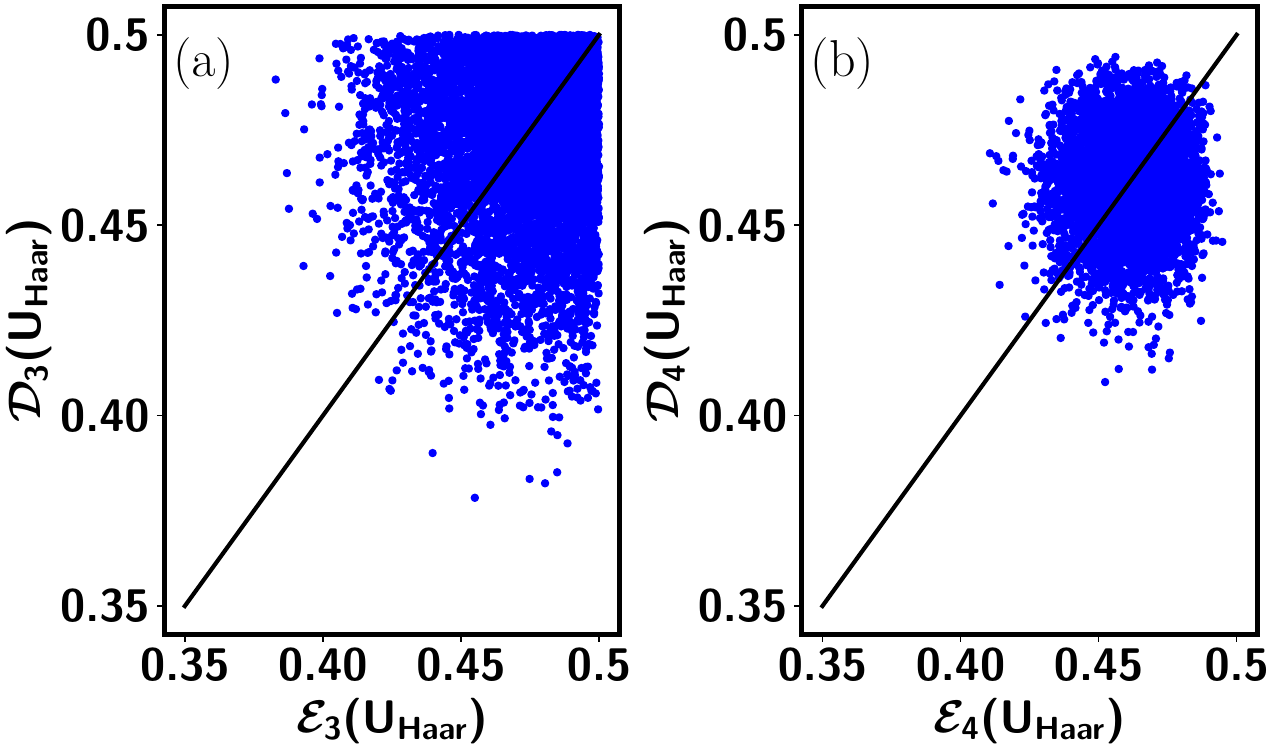}
\caption{ \textbf{Unequal multipartite entangling  and disentangling powers for random unitaries.} \((a)\) Disentangling power, \(\mathcal{D}_3(U_{\text{Haar}})\) (vertical axis) versus entangling power \(\mathcal{E}_3(U_{\text{Haar}})\) (horizontal axis) for Haar uniformly generated eight dimensional unitary operators. \((b)\) same as in (a) for  sixteen-dimensional unitaries. In both cases, unitaries are  generated Haar uniformly.  Solid line is the case when they match. Clearly, they differ in the plot unlike in Fig. \ref{fig:3_4_5qubit_diag_ggm}. Both axes are dimensionless.}
\label{fig:diff_ggm_rand} 
\end{figure}

\section{Inequivalence between multipartite entangling and disentangling powers with random unitaries}
\label{sec:inequ_app_c}

We observe that instead of constructing unitaries which act on pairwise sites, if one randomly samples \(10^4\) eight- and sixteen- dimensional unitary operators according to the Haar measure, a pronounced dissimilarity can again be observed between their multipartite entangling and disentangling capabilities. For depiction, we plot the disentangling power with respect to the entangling ones for a fixed unitary operator in Fig.~\ref{fig:diff_ggm_rand}. If \(\mathcal{E}_{N}(U_{\mathrm{Haar}})=\mathcal{D}_{N}(U_{\mathrm{Haar}})\), they all should have been on the diagonal, as observed in Fig.~\ref{fig:3_4_5qubit_diag_ggm} for diagonal unitaries. In this case, we clearly see that the points scatter around the diagonal, highlighting their differences.

\newpage
\bibliography{bib}

\end{document}